\definecolor{darkgreen}{rgb}{0,0.4,0}
\definecolor{BrickRed}{rgb}{0.65,0.08,0}
\definecolor{darkblue}{rgb}{0.0, 0.0, 0.55}
\newcommand{\walksym}{\omega}
\newcommand{\walk}[1]{\walksym_{#1}}
\newcommand{\PNgeq}{P_0^{\geq}}
\newcommand{\dNgeq}{\delta_0^{\geq}}
\newcommand{\rNgeq}{\rho_0^{\geq}}
\newcommand{\pzeroi}{p_{0,i}}
\newcommand{\stepset}{\mathcal{S}}
\newcommand{\LandauO}{\mathcal{O}}
\newcommand{\Landauo}{o}
\newcommand{\PR}{\mathbb{P}} 
\newcommand{\E}{\mathbb{E}} 
\newcommand{\Ac}{\mathcal{A}}
\newcommand{\Ec}{\mathcal{E}}
\newcommand{\N}{\mathbb{N}}
\newcommand{\Q}{\mathbb{Q}}
\newcommand{\Z}{{\mathbb Z}}
\newcommand{\diamondend}{ \hfill ${\lozenge} $}
\newtheorem{theo}{Theorem}[section]
\newtheorem{lemma}[theo]{Lemma}
\newtheorem{prop}[theo]{Proposition}
\newenvironment{definition}[1][]{\refstepcounter{theo} \medskip \noindent \textbf{\textit{Definition \thetheo #1:}} }{ \diamondend }
\newenvironment{remarkFormulaEnd}[1][]{\refstepcounter{theo}\medskip \noindent\textbf{\textit{Remark \thetheo #1:}} }{ }
\author[Cyril Banderier \& Michael Wallner]{Cyril Banderier\addressmark{1} \and Michael Wallner\addressmark{2} }
\title[Some reflections on directed lattice paths.]{Some reflections on directed lattice paths.}
\address{\addressmark{1}Laboratoire d'Informatique de Paris Nord, UMR CNRS 7030, Universit\'e Paris Nord, 93430 Villetaneuse, France\\
\addressmark{2}Institute of Discrete Mathematics and Geometry, TU Wien, Wiedner Hauptstr. 8-10/104, A-1040 Wien, Austria}
\keywords{Lattice Path; Analytic Combinatorics; Singularity Analysis; Limit Laws; Space Inhomogenous Walk; Kernel method}
\begin{document}
\maketitle
\begin{center}
\it{ This article corresponds, up to minor typo corrections and a correction of Table~4 (half-normal instead of Rayleigh), to the extended abstract which appeared in the Proceedings of the AofA'14 Paris Conference.}
\end{center}

\begin{abstract}
\paragraph{Abstract.}
This article analyzes \emph{directed lattice paths}, when a boundary reflecting or absorbing condition is added to the classical models. 
The lattice paths are characterized by two time-independent sets of rules (also called steps) 
which have a privileged direction of increase and are therefore essentially one-dimensional objects. 
Depending on the spatial coordinate, one of the two sets of rules applies, 
namely one for altitude $0$ and one for altitude bigger than $0$. 
The abscissa $y=0$ thus acts as a border which either absorbs or reflects steps. 
The absorption model corresponds to the model analyzed by Banderier and Flajolet (``Analytic combinatorics of directed lattice paths''),
while the reflecting model leads to a more complicated situation.
We show how the generating functions are then modified:
the kernel method strikes again but here it unfortunately does not give a nice product formula. 
This makes the analysis more challenging, and, in the case of {\L}ukasiewicz walks,
we give the asymptotics for the number of excursions, arches and meanders.
Limit laws for the number of returns to 0 of excursions are given.
We also compute the limit laws of the final altitude of meanders.
The full analytic situation is more complicated than the Banderier--Flajolet model 
(partly because new ``critical compositions'' appear, forcing us to introduce new key quantities, like the drift at 0),
and we quantify to what extent the global drift, and the drift at 0 play a role in the ``universal'' behavior of such walks. 
\end{abstract}
\vspace{-1ex}
\section{Introduction}
\label{sec:intro}
In Brownian motion theory, many possible boundary conditions for a Brownian-like process have been considered (e.g.~absorption, killed Brownian motion, reflected Brownian motion... see~\cite{Feller54}).
Solving a stochastic differential equation with a reflecting boundary condition is known as the 
Skorokhod problem (see~\cite{Skorokhod62}). Such models appear e.g.~in queueing theory (see~\cite{Kingman62}).
In this article, we want to investigate properties of a discrete equivalent 
of such models, namely directed lattice paths in $\Z^2$, having a reflecting boundary at $y=0$.

If one considers lattice paths which are ``killed'' or ``absorbed'' at $y=0$,
then this is equivalent to the model analyzed in~\cite{bafl02}.
In what follows, we want to compare the basic properties (exact enumeration, asymptotics, limit laws) 
of these two discrete models (absorption versus reflection).
In particular, we will consider the {\L}ukasiewicz paths (defined hereafter), 
which are present in numerous fields like analysis of algorithms, combinatorics, language theory, probability theory and 
biology. 
This broad applicability is due to a bijection with simple families of trees, see e.g.~\cite{memo78}. 
The enumerative and analytic properties of such lattice paths were considered
in \cite{bagi06} where limit laws for the area beneath {\L}ukasiewicz paths are derived, 
and also in \cite{brip11} where they are used to model polymers in chemistry,
or e.g.~in \cite{bani10}, which tackles the problem of enumeration and asymptotics of such walks of bounded height. 

Our key tools will be the kernel method and analytic combinatorics (see~\cite{flaj09}).
However, as we will see, the situation is more complicated in the case of a reflecting boundary:
first, bad luck, one does not have a nice product formula for the generating function anymore (unlike the absorption model),
second, the drift still plays a key role, but also does a ``second'' drift at 0,
and last but not least, several simultaneous singular behaviors can happen.
We first begin with a few definitions:

 \begin{definition} \label{def:LP}
 A {\it step set} $\stepset \subset \integers^2$, is a finite set of vectors $\{ (a_1,b_1), \ldots, (a_m,b_m)\}$. 
An $n$-step \emph{lattice path} or \emph{walk} is a sequence of vectors $v = (v_1,\ldots,v_n)$, such that $v_j$ is in $\stepset$. 
Geometrically, it is a set of points $\walksym =(\walk{0},\walk{1},\ldots,\walk{n})$ where $\walk{i} \in \integers^2, \walk{0} = (0,0)$ 
and $\walk{i}-\walk{i-1} = v_i$ for $i=1,\ldots,n$.
 The elements of $\stepset$ are called \emph{steps} or \emph{jumps}. 
The \emph{length} $|\walksym|$ of a lattice path is its number $n$ of jumps. 
 \end{definition}

 \begin{table}[htbp]
 \small
 \begin{center}\renewcommand{\tabcolsep}{3pt}
 \begin{tabular}{|c|c|c|}
 \hline
 & ending anywhere & ending at 0\\
 \hline
 \begin{tabular}{c} unconstrained \\ (on~$\Z$) \end{tabular}
 & \begin{tabular}{c} 

 {\includegraphics[width=3.5cm,height=12mm]{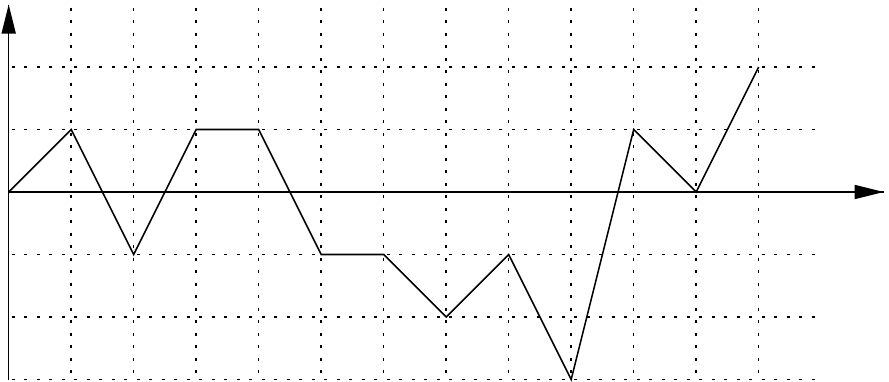}} \\ 
 walk/path ($\cal W$) 
 \end{tabular}
 & \begin{tabular}{c} 

 {\includegraphics[width=3.5cm,height=12mm]{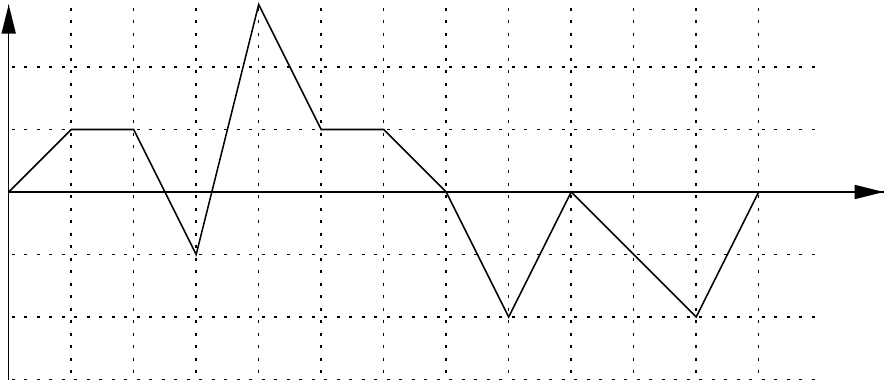}} \\
 bridge ($\cal B$)
 \end{tabular} \\
 \hline
 \begin{tabular}{c}constrained\\ (on $\N$) \end{tabular}
 & \begin{tabular}{c} 
 \includegraphics[width=3.5cm,height=12mm]{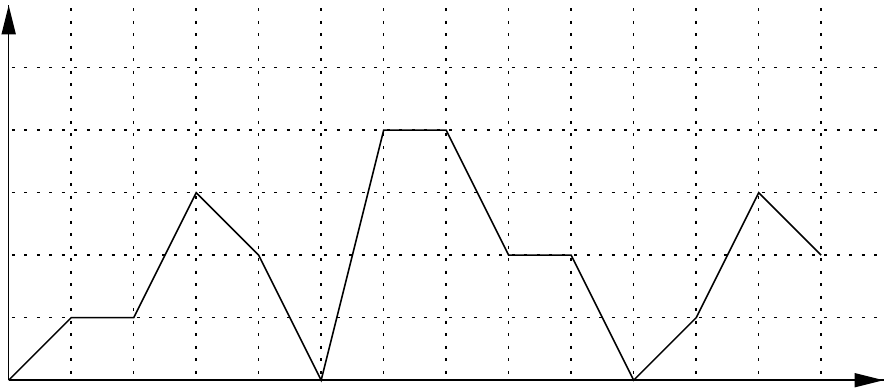} \\ 
 meander ($\cal M$)\\ 
 \end{tabular}
 & \begin{tabular}{c} 
 {
 \includegraphics[width=3.5cm,height=12mm]{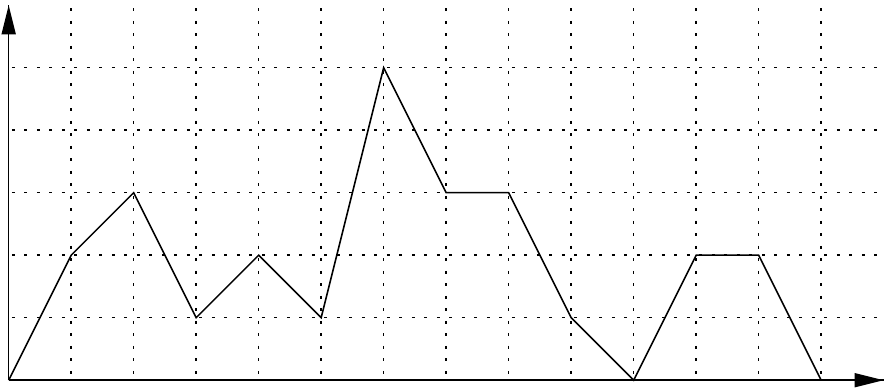}} \\ 
 excursion ($\cal E$)\\ 
 \end{tabular}\\
 \hline
 \end{tabular}
 \end{center}
 \caption{\label{fig-4types} 
 The four types of paths: walks, bridges, meanders and excursions.}
 \end{table}
We restrict our attention to \emph{directed paths} which are defined by the fact that for $(a,b) \in \stepset$ one must have $a > 0$. 
However, we will focus only on the subclass of \emph{simple paths}, where every element in the step set $\stepset$ is of the form $(1,b)$. 
In other words, these walks constantly move one step to the right, 
thus they are essentially unidimensional objects. We introduce the abbreviation $\stepset = \{ b_1, \ldots, b_n \}$ in this case. 
A \emph{{\L}ukasiewicz path} is a simple path where its associated step set $\stepset$ is a subset of $\{-1,0,1,\ldots\}$ and $-1 \in \stepset$.

 \begin{definition}
 For a given step set $\stepset = \{s_1,\ldots,s_m\}$, we define the respective {\it system of weights}\index{lattice path!weights} 
as $ \{w_1,\ldots,w_m\}$ with $w_j >0$ the associated weight to step $s_j$ for $j=1,\ldots,m$. 
The {\it weight of a path} is defined as the product of the weights of its individual steps. 
 \end{definition}

\smallskip
 This article mainly builds on the work done in \cite{bafl02}. Therein, the class of directed lattice paths 
in $\Z^2$ (under the absorption model) was investigated thoroughly by means of analytic combinatorics (see \cite{flaj09}). 
First, in Section \ref{sec:general}, the reflection-absorption model and the general framework are introduced. 
The needed bivariate generating function is defined and the governing functional equation is derived and solved:
here the ``kernel method'' plays the most significant role in order to obtain the generating function 
(as typical for many combinatorial objects which are recursively defined with a ``catalytic parameter'', see \cite{BMJ06}). 
In Section \ref{sec:Luk}, we turn our attention to {\L}ukasiewicz paths, 
and the asymptotic number of excursions is given. 
In Section \ref{sec:contacts}, the limit laws for the number of returns to zero of excursions are derived. 
In Section \ref{sec:meanders}, we establish the asymptotics of meanders. 
Section \ref{sec:finAlt} gives the asymptotics for the expected final altitude of meanders. 
\pagebreak
\section{Generating functions}
\label{sec:general}

Let us consider directed walks on $\N^2$, with a weighted step set $\stepset$, 
starting at the origin, confined to the upper half plane,
and which have another weighted step set $\stepset_0$ on the boundary $y=0$. 
All such walks are called \emph{meanders}, and the meanders ending on the abscissa are called \emph{excursions}. 

This walk model is thus encoded by two \emph{characteristic polynomials}: 
$P(u)$ and $P_0(u)$ are Laurent polynomials describing the allowed jumps when the walk is at altitude $k>0$ or $k=0$, respectively. 
We fix $c,d,c_0,d_0 \in \N$ and introduce the following notations:
\vspace{-1mm}
\begin{align*}
 P(u) &= \sum_{i=-c}^d p_i u^i, &
 P_0(u) &= \sum_{i=-c_0}^{d_0} \pzeroi u^i, & \PNgeq(u) &= \sum_{i=0}^{d_0} \pzeroi u^i.
\end{align*}

\vspace{-2mm}
In order to exclude trivial cases we require $p_c, p_d \neq 0$. These weights are probabilities, which means $p_i, \pzeroi \geq 0$ and $P(1)=P_0(1)=1$.
These step polynomials characterize the \emph{reflection-absorption model}:
depending of the chosen weights, the boundary behaves like a reflecting or an absorbing wall. 
We talk about a \emph{reflection model} if $\PNgeq(u)=P_0(u)$,
while we talk about an \emph{absorption model} if $\PNgeq(u)\neq P_0(u)$. 

\renewcommand{\arraystretch}{1.35}
\newcommand{\scalingDyckPics}{0.08}
\begin{savenotes}
\begin{table}[hbtp]
 \begin{center}
 \begin{tabular}{|>{\centering\arraybackslash} m{\scalingDyckPics\textwidth}|c|c|c|c|}
 \hline & bridges, & absolute value & excursions, & excursions, \\
 Dyck path & uniform model & of bridges
 & reflection model & absorption model \\
 \hline\hline {\includegraphics[width=\scalingDyckPics\textwidth]{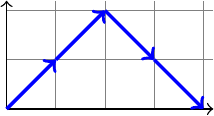}} & $\frac{1}{6}$ & \begin{tabular}{c}$\frac{1}{3} $\end{tabular} & $\frac{1}{3}$
\footnote{Note that the absolute value and the reflection model are in general not equivalent if the jumps (with their weights) are not symmetric: Let $P(u) = pu + q u^{-1}$ and $P_0(u) = p_0 u + q_0 u^{-1}$,
 then the probability of this first path (which is $1/3$ when $p=q=1/2$ and $p_0=1$) is $1/(1+q_0/q+p_0/p)$ in the absolute value model, while it is $p/(1+p)$ in the reflection model.}
& $\frac{1}{2}$ \\
 \hline {\includegraphics[width=\scalingDyckPics\textwidth]{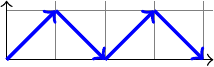}} & $\frac{1}{6}$ & $\frac{2}{3}$ & $\frac{2}{3}$ & $\frac{1}{2}$ \\
 \hline \includegraphics[width=\scalingDyckPics\textwidth]{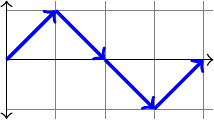} & $\frac{1}{6}$ & $0$ & $0$ & $0$ \\
 \hline \includegraphics[width=\scalingDyckPics\textwidth]{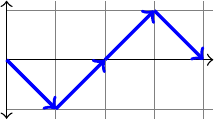} & $\frac{1}{6}$ & $0$ & $0$ & $0$ \\
 \hline \includegraphics[width=\scalingDyckPics\textwidth]{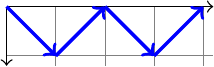} & $\frac{1}{6}$ & $0$ & $0$ & $0$ \\
 \hline \includegraphics[width=\scalingDyckPics\textwidth]{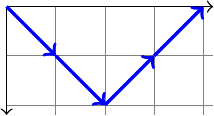} & $\frac{1}{6}$ & $0$ & $0$ & $0$ \\
 \hline
 \end{tabular}
 \end{center}
 \caption{Different constraints on the boundary $y=0$ lead to different probabilistic models. 
We give the probabilities of Dyck paths of length $4$ in the uniform, absolute value, reflection, and absorption model. 
From this table, one can already see one paradox associated
to the reflection model: one may think that the ``reflection'' 
will make the walk go far away. However, this is in part counterbalanced
by the fact that 0 has a ``heavier'' weight in this model (no loss of mass here,
contrary to the absorption model). Accordingly, 
there will be some interplay between the boundary,
the drift of the walk and the drift at 0.
We quantify this in our next sections.}
 \label{tab:lukDiffModels}
\end{table}
\renewcommand{\arraystretch}{1.0}
\end{savenotes}

\pagebreak
We define the generating function for our meanders to be 
\begin{align*}
 F(z,u) &:= \sum_{n,k \geq 0} F_{n,k} u^k z^n 
 = \sum_{n \geq 0} f_n(u) z^n = \sum_{k \geq 0} F_k(z) u^k,
\end{align*}
where the polynomials $f_n(u)$ describe the possible positions after $n$ steps,
and where $F_k(z)$ are the generating functions of walks starting at 0 
and ending at altitude $k$.

\begin{theo}[Generating function for meanders and excursions]
 \label{theo:lukBGF}
 The bivariate generating function of meanders (where $z$ marks size and $u$ marks final altitude) in the reflection-absorption model is algebraic:
 \begin{align}
 \label{eq:lukBGFalgebraic}
 F(z,u) &= \frac{1 - z \sum_{k=0}^{c-1} r_k(u) F_k(z)}{1-zP(u)},
 \end{align}
 where $r_k$ is a Laurent polynomial given by $r_k(u) = \sum_{j=-c}^{-k-1} p_j u^{j+k}$ for $k>0$ and $r_0(u)=P(u)-\PNgeq(u)$.
 Furthermore, the $F_k$'s are algebraic functions belonging to $\Q(u_1,\dots,u_c,p_{-c},\dots,$ $p_d,p_0^0,\dots,p_d^0,z)$, where the $u_i$'s are the roots of the equation $1-zP(u)=0$, such that $\lim_{z \to 0} u_i(z) = 0$. 
The $F_k$'s can be made explicit, e.g.~the generating function for excursions is
 \newcommand{\PROD}{\prod_i^c u_i}
 \begin{align}\label{F0}
 F_0(z) &= \frac{ \sum_{\ell = 1} ^{c} (-1)^{\ell+1} u_\ell^{c-1} V(\ell) } {\sum_{\ell = 1} ^{c} (-1)^{\ell+1} u_{\ell}^{c-1} \left( 1 - z\PNgeq(u_\ell) \right) V(\ell) },
 \end{align}
 where $V(\ell) = \prod_{\substack{ 1 \leq m < n \leq c \\ m \neq \ell,~n \neq \ell }}\left( u_m - u_n \right)$.
\end{theo}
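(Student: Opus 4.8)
The plan is to derive a linear functional equation for $F(z,u)$ by a one-step recurrence, then to apply the kernel method. First I would translate the growth of a meander by one step into a relation between $f_{n+1}(u)$ and $f_n(u)$. Applying the step polynomial $P(u)$ to the position-encoding polynomial $f_n(u)=\sum_k F_{n,k}u^k$ counts all one-step extensions from altitude $k>0$, but it over-counts in two ways: from a positive altitude $k$ with $1\le k<c$ it keeps the forbidden steps that cross below the $x$-axis, namely the negative-power part $\sum_{i=-c}^{-k-1}p_i u^{k+i}=r_k(u)$; and at altitude $0$ it uses the wrong polynomial, since the admissible steps there are encoded by $\PNgeq(u)$, not $P(u)$. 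Correcting both defects at once gives
\[
 f_{n+1}(u)=P(u)f_n(u)-\sum_{k=0}^{c-1}F_{n,k}\,r_k(u),
\]
where the $k=0$ term carries $r_0(u)=P(u)-\PNgeq(u)$, which simultaneously removes the negative powers and swaps $P$ for $\PNgeq$ on the boundary. Multiplying by $z^{n+1}$, summing over $n\ge0$ with $f_0(u)=1$, and solving for $F(z,u)$ yields exactly~\eqref{eq:lukBGFalgebraic}.

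The numerator of~\eqref{eq:lukBGFalgebraic} contains the $c$ unknown series $F_0,\dots,F_{c-1}$, which I would pin down by the kernel method. Clearing denominators, the kernel $1-zP(u)=0$ becomes the polynomial equation $u^c-zu^cP(u)=0$ of degree $c+d$ in $u$; following Banderier--Flajolet, a Rouch\'e / Newton-polygon argument shows that exactly $c$ of its roots $u_1(z),\dots,u_c(z)$ are small, i.e.\ are Puiseux series with $u_i(z)\to0$ as $z\to0$, while the remaining $d$ roots escape to infinity. Since $F(z,u)$ is analytic for $|u|\le 1$ and $z$ small, each substitution $u=u_i(z)$ is legitimate and the left-hand side stays finite; as the kernel vanishes there, the numerator must vanish too, giving the linear system $z\sum_{k=0}^{c-1}r_k(u_i)F_k(z)=1$ for $i=1,\dots,c$. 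Cramer's rule then expresses every $F_k$ as a rational function of the $u_i$ and the weights, which places them in the stated field $\Q(u_1,\dots,u_c,p_{-c},\dots,p_d,p_0^0,\dots,p_d^0,z)$ and proves their algebraicity.

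To obtain the closed form~\eqref{F0} for $F_0$, I would evaluate the two Cramer determinants by exploiting the triangular structure of the $r_k$. Writing $t_i=u_i^{-1}$, each column $r_k(u_i)=\sum_{m=1}^{c-k}p_{-k-m}t_i^{\,m}$ with $k\ge1$ is a polynomial in $t_i$ of exact degree $c-k$ vanishing at $t_i=0$, all sharing leading coefficient $p_{-c}$; factoring one power of $t_i$ from each row and invoking the generalized Vandermonde evaluation reduces the minor obtained by deleting row $\ell$ and the column $k=0$ to a constant multiple of $u_\ell^{c-1}V(\ell)$, the constant being independent of $\ell$. Expanding both the denominator determinant (whose $0$-th column has entries $r_0(u_i)$) and the numerator determinant (whose $0$-th column is the all-ones vector) along that column, the $\ell$-independent constant cancels in the ratio, leaving $\sum_\ell(-1)^{\ell+1}u_\ell^{c-1}V(\ell)$ on top and $\sum_\ell(-1)^{\ell+1}z\,r_0(u_\ell)u_\ell^{c-1}V(\ell)$ on the bottom. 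Finally, on a kernel root $zP(u_\ell)=1$, so $z\,r_0(u_\ell)=zP(u_\ell)-z\PNgeq(u_\ell)=1-z\PNgeq(u_\ell)$, which turns the denominator into the form displayed in~\eqref{F0}.

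The routine part is the recurrence bookkeeping and the final determinant algebra; the genuine obstacles are the two classical pillars of the kernel method. The first is the root count: one must argue rigorously that precisely $c$ roots of $u^c(1-zP(u))$ are analytic branches tending to $0$, so that the number of admissible substitutions matches the number $c$ of unknowns $F_0,\dots,F_{c-1}$. The second is justifying the substitution itself --- that $F(z,u_i(z))$ is a well-defined finite quantity forcing numerator cancellation --- which rests on the convergence of $F(z,u)$ for $|u|\le 1$. I expect the determinant identification of the minors with $V(\ell)$ to be the most delicate piece of the explicit computation, since one must track the leading coefficients $p_{-c}$ together with the sign and the powers of $u_\ell$ produced by converting the Vandermonde in $t_i=u_i^{-1}$ back to one in the $u_i$.
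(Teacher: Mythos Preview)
Your proposal is correct and follows essentially the same route as the paper: step-by-step recurrence leading to the functional equation, kernel method with the $c$ small branches producing a $c\times c$ linear system, and a Vandermonde-type reduction of the Cramer determinants after expansion along the $k=0$ column, finished off by the kernel identity $zP(u_\ell)=1$ to rewrite $z\,r_0(u_\ell)=1-z\PNgeq(u_\ell)$. The only cosmetic difference is that the paper reduces the minors to Vandermonde form via column subtractions on the $r_k$ columns, whereas you pass to the reciprocals $t_i=u_i^{-1}$ and invoke the generalized Vandermonde evaluation directly; both manipulations exploit the same triangular structure $\deg_{t}r_k=c-k$ with common leading coefficient $p_{-c}$ and yield the same $u_\ell^{c-1}V(\ell)$ factors.
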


\begin{proof}[(Sketch)]
 It is straightforward to derive a recurrence relation, by a step-by-step approach:
 \begin{align}
 f_0(u) &= 1, & 
 f_{n+1}(u) &= \{u^{\geq 0}\} \left[ P(u) \{u^{>0}\} f_n(u) + P_0(u) \{u^0\} f_n(u) \right], \label{eq:lukRecBGF}
 \end{align}
where $\{u^{> 0}\}$ extracts all the monomials of positive degree in $u$.
This recurrence leads to the following functional equation
 \begin{align}
 F(z,u) &= 1 + zP(u) F(z,u) - z\{u^{<0}\}P(u)F(z,u) - zF_0(z) \left(\{u^{\geq 0}\}P_0(u) - \{u^{\geq 0}\} P(u)\right), \label{eq:genFuncEq}
 \end{align} 
 \begin{align}
 F(z,u)(1-zP(u)) &= 1 - z \left(P(u) - \PNgeq(u)\right) F_0(z) - z \sum_{k=1}^{c-1} r_k(u) F_k(z). \label{eq:genFuncEq2}
 \end{align}

 The main tool for solving the functional equation is the \emph{kernel method}, which consists of binding $z$ and $u$ 
 in such a way that the left hand side vanishes. 
 From the theory of Newton--Puiseux expansions, we know that the kernel equation $1-zP(u)=0$ has $c+d$ distinct solutions, 
 with $c$ of them being called ``small branches'', as they map $0$ to $0$ and are in modulus smaller 
 than the other $d$ ``large branches'' which grow to infinity while approaching $0$. We call the small branches $u_1, \ldots, u_c$.

 Inserting the $c$ small branches into \eqref{eq:genFuncEq2},
 we get a linear system of $c$ equations in $c$ unknowns $F_0,\ldots,F_{c-1}$:
 $$\begin{cases}
 u_1^c - z \sum_{k=0}^{c-1} u_1^c r_k(u_1) F_k(z) &= 0, \notag \\
 \qquad \qquad \qquad \vdots & \label{eq:genLS} \\
 u_c^c - z \sum_{k=0}^{c-1} u_c^c r_k(u_c) F_k(z) &= 0. \notag
 \end{cases}$$

The system is non-singular, as can be proven via the local behavior of the $u_i$'s (and therefore of the $r_k(u_j)$'s) for $z\sim 0$. 
Using Cramer's formula on this matrix does not give (directly) a nice formula ; it is better to use first column subtraction for the $r_k$'s in the corresponding determinants,
and then expanding with respect to the $r_0(u_i)$ column leads to a nicer formula for each $F_i$ via Vandermonde-like formul\ae.
\end{proof}

Let us recall that the generating functions $\widetilde E(z)$ of the excursions in the absorption model of Banderier--Flajolet (i.e.,~when $P_0(u)=P(u)$)
is given by $\widetilde E(z)=(-1)^{c+1} (\prod u_i(z)) / (z p_{-c}) $, see \cite[Equation (20)]{bafl02}.
It is interesting to compare this simple formula with the more cumbersome formula that one gets for 
the generating functions $E(z)$ of the excursions in our reflection-absorption model (coming from a rewriting of \eqref{F0}): 
\begin{equation} E(z):=F_0(z) =\frac{\widetilde E(z)}{1-z \widetilde E(z) \sum_{i=1}^c r_0(u_i) u_i^{c-1}/V(i)}\,. \label{eq:Etild} \end{equation}
In one sense, this formula quantifies to what extent 
the border ``perturbs'' the former $\widetilde E(z)$ to lead to our new $E(z)$.
We now investigate the analytic counterparts of this perturbation.

\section{Asymptotics of excursions}
\label{sec:Luk}
From now on, we are going to work with \emph{aperiodic {\L}ukasiewicz paths}. By these, we understand paths with one jump of size $1$ down and finitely but arbitrarily many jumps up. Aperiodic means that there is no $p>1$ 
and there exists no polynomial $H(u)$ s.t.~$P(u) = u^{-1} H(u^p)$. Thus, the step polynomial is given as
\begin{align*}
 P(u) &= p_{-1} u^{-1} + p_0 + p_1 u + \ldots + p_d u^d,
\end{align*}
with $p_{-1} + \ldots + p_d = 1$ and $p_i \in [0,1]$. 
Since $c=1$, the linear system derived from \eqref{eq:genFuncEq2} transforms into
\begin{align*}
 1 + z \left(\PNgeq(u_1) - P(u_1) \right) F_0(z) = 0.
\end{align*}
We use the kernel equation $1-zP(u_1)=0$ to derive the generating function of excursions:
\begin{align}
 \label{eq:lukA0}
 E(z) := \sum e_n z^n &:= F_0(z) = \frac{1}{1 - z\PNgeq(u_1(z))}.
\end{align}

This nice formula has a natural combinatorial interpretation as $\operatorname{Seq}\left(z \PNgeq \left(\widetilde E(z) p_{-1} z\right)\right)$,
i.e.,~an excursion (in the reflection model) is a sequence of arches (i.e.,~an excursion touching 0 just at its two ends), and each arch begins with a positive jump $+k$, which has to be compensated
by $k$ excursions (well, shifted excursions: from altitude $j$ to altitude $j$, for $j$ from $1$ to $k$,
thus not touching 0, and thus in bijection with excursions, counted by $\widetilde E(z)$ and defined above formula \eqref{eq:Etild}) followed each by a $-1$ jump.

In \cite[Equation (42)]{bafl02}, it was shown that the principal branch $u_1(z)$ possesses the following asymptotic expansion for $z \to \rho^-$, where $\rho$ is the structural radius defined as $\rho=\tfrac{1}{P(\tau)}$ and $\tau > 0$ is the unique root of $P'(\tau)=0$ (note that $P$ is a convex function):
\begin{align}
 \label{eq:asyu1}
 u_1(z) &= \tau - \sqrt{2 \frac{P(\tau)}{P''(\tau)}} \sqrt{1 - z/\rho} + \LandauO(1-z/\rho), \qquad \text{ for } z \to \rho^-.
\end{align}

As this expansion will appear repeatedly in the sequel, we define $C:=\sqrt{2 \frac{P(\tau)}{P''(\tau)}}$.
The singularities of \eqref{eq:lukA0} depend on the roots of the denominator and on the singular behavior of $u_1(z)$, that is why we need the following lemma:
\begin{lemma}[Singularity of the denominator]
 \label{lem:lukA0Denom}
 Let $u_1(z)$ be the unique small branch of the kernel equation $1-zP(u)=0$. Then the equation $1 - z\PNgeq(u_1(z))=0$ has at most one solution in $z \in (0,\rho]$, which we denote by $\rho_1$.
\end{lemma}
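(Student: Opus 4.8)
The plan is to recast the statement as a monotonicity fact: the claimed equation is exactly $h(z)=1$ for the function $h(z):=z\,\PNgeq(u_1(z))$, so it suffices to show that $h$ is strictly increasing on $(0,\rho]$, which forces injectivity and hence at most one preimage of $1$. (If $\PNgeq\equiv 0$ the equation reads $1=0$ and has no solution, so I may assume $\PNgeq$ is a nonzero polynomial.)

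First I would pin down the behaviour of the small branch $u_1$ on the real axis. Since we are in the {\L}ukasiewicz case with $c=1$, $P$ is strictly convex on $(0,\infty)$ with a single minimum at $\tau$, so $P'<0$ on $(0,\tau)$ and the map $u\mapsto 1/P(u)$ is a strictly increasing bijection from $(0,\tau)$ onto $(0,\rho)$: it runs from $1/P(0^+)=0$ (recall $p_{-1}>0$, so $P(u)\to+\infty$ as $u\to0^+$) up to $1/P(\tau)=\rho$. Its inverse is precisely the real positive small branch $z\mapsto u_1(z)$ of $1-zP(u)=0$; hence $u_1$ is a strictly increasing continuous map from $(0,\rho]$ onto $(0,\tau]$, with $u_1(\rho)=\tau$, consistent with the expansion \eqref{eq:asyu1}. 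In particular $u_1(z)>0$ throughout.

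Next I would exploit the sign structure of the boundary polynomial. By definition $\PNgeq(u)=\sum_{i=0}^{d_0}\pzeroi u^i$ has nonnegative coefficients, so $\PNgeq$ is nondecreasing on $[0,\infty)$ and $\PNgeq(u)>0$ for every $u>0$. Composing with the strictly increasing $u_1$ shows that $z\mapsto\PNgeq(u_1(z))$ is nondecreasing and strictly positive on $(0,\rho]$. Multiplying by the strictly increasing positive factor $z$ then gives strict monotonicity of $h$: for $0<z_1<z_2\le\rho$ we have $h(z_1)=z_1\PNgeq(u_1(z_1))<z_2\PNgeq(u_1(z_1))\le z_2\PNgeq(u_1(z_2))=h(z_2)$. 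A strictly increasing function is injective, so $h(z)=1$ admits at most one solution $\rho_1\in(0,\rho]$, which is the assertion.

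The argument is elementary once monotonicity of $u_1$ is in hand, so I do not anticipate a serious obstacle. The only point needing care is the justification that the small branch coincides on $(0,\rho)$ with the real inverse of $u\mapsto 1/P(u)$ (rather than leaving the real axis), which rests on the standard Newton--Puiseux and convexity analysis of $1-zP(u)=0$ from \cite{bafl02}, together with the continuous extension to the endpoint $z=\rho$, where $u_1'$ blows up like an inverse square root but $u_1$ itself remains finite and monotone.
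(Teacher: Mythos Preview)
Your proof is correct and follows essentially the same approach as the paper: both rely on the monotonicity of $u_1(z)$ and of $\PNgeq(u)$ to conclude that the equation has at most one root. The paper phrases this as the intersection of the increasing curve $\PNgeq(u_1(z))$ with the decreasing hyperbola $1/z$ (as in Figure~\ref{fig:lukSingularBehavior}), while you equivalently argue that $h(z)=z\,\PNgeq(u_1(z))$ is strictly increasing; your version is more explicit about why $u_1$ is increasing and handles the degenerate case $\PNgeq\equiv 0$, but the underlying idea is the same.
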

\begin{proof}
 The functions $u_1(z)$ and $\PNgeq(u)$ are increasing on $[0,\rho)$ (see \cite{bafl02}). Figure~\ref{fig:lukSingularBehavior} shows the three possible configurations. The naming convention is adopted from its use in functional composition schemes in \cite[Chapter VI.9]{flaj09}.
\end{proof}

\begin{figure}[ht]
 \begin{center}
 \subfigure[supercritical case]{\includegraphics[width=0.32\textwidth]{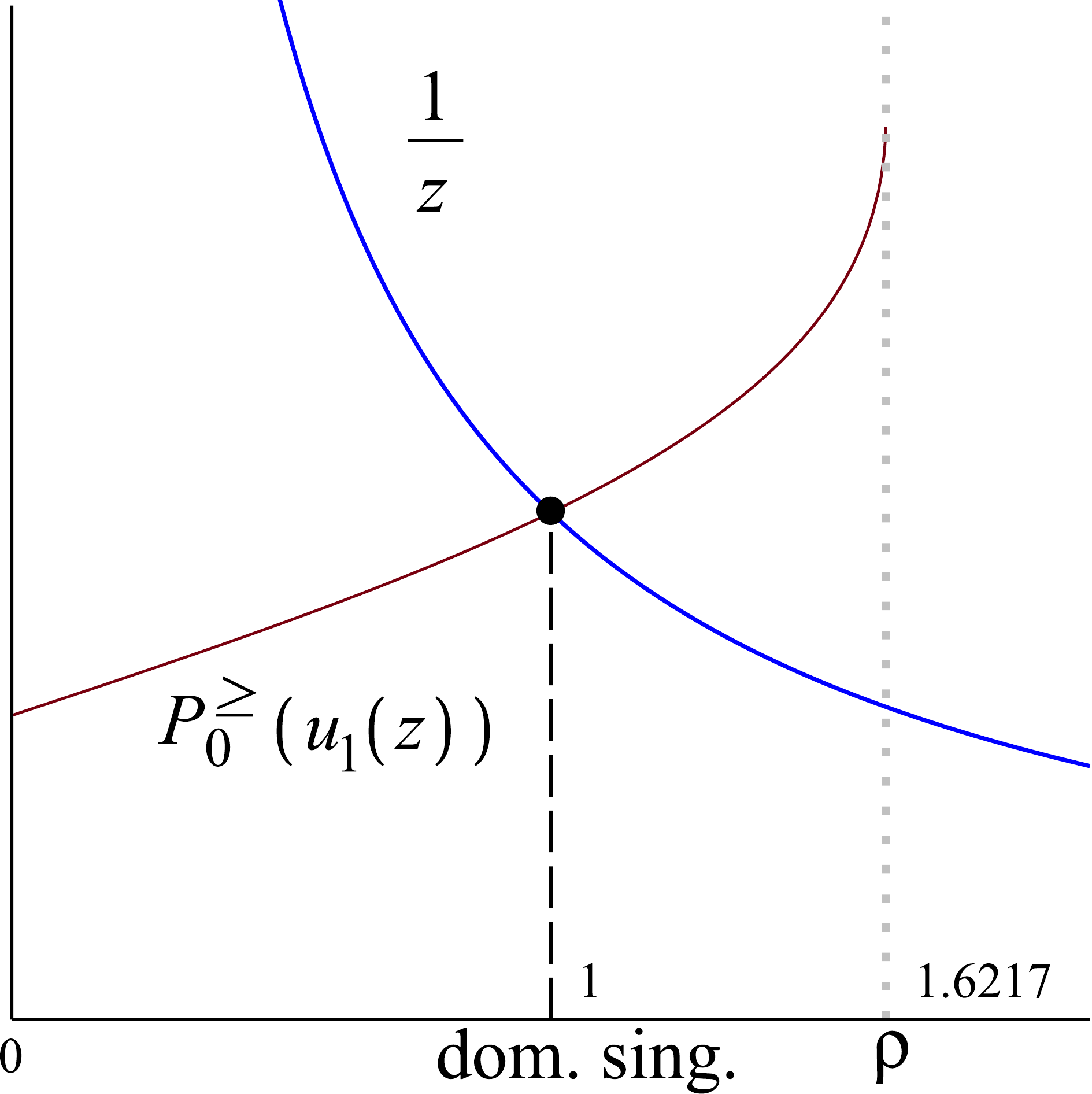}}
 \hfil
 \subfigure[critical case]{\includegraphics[width=0.32\textwidth]{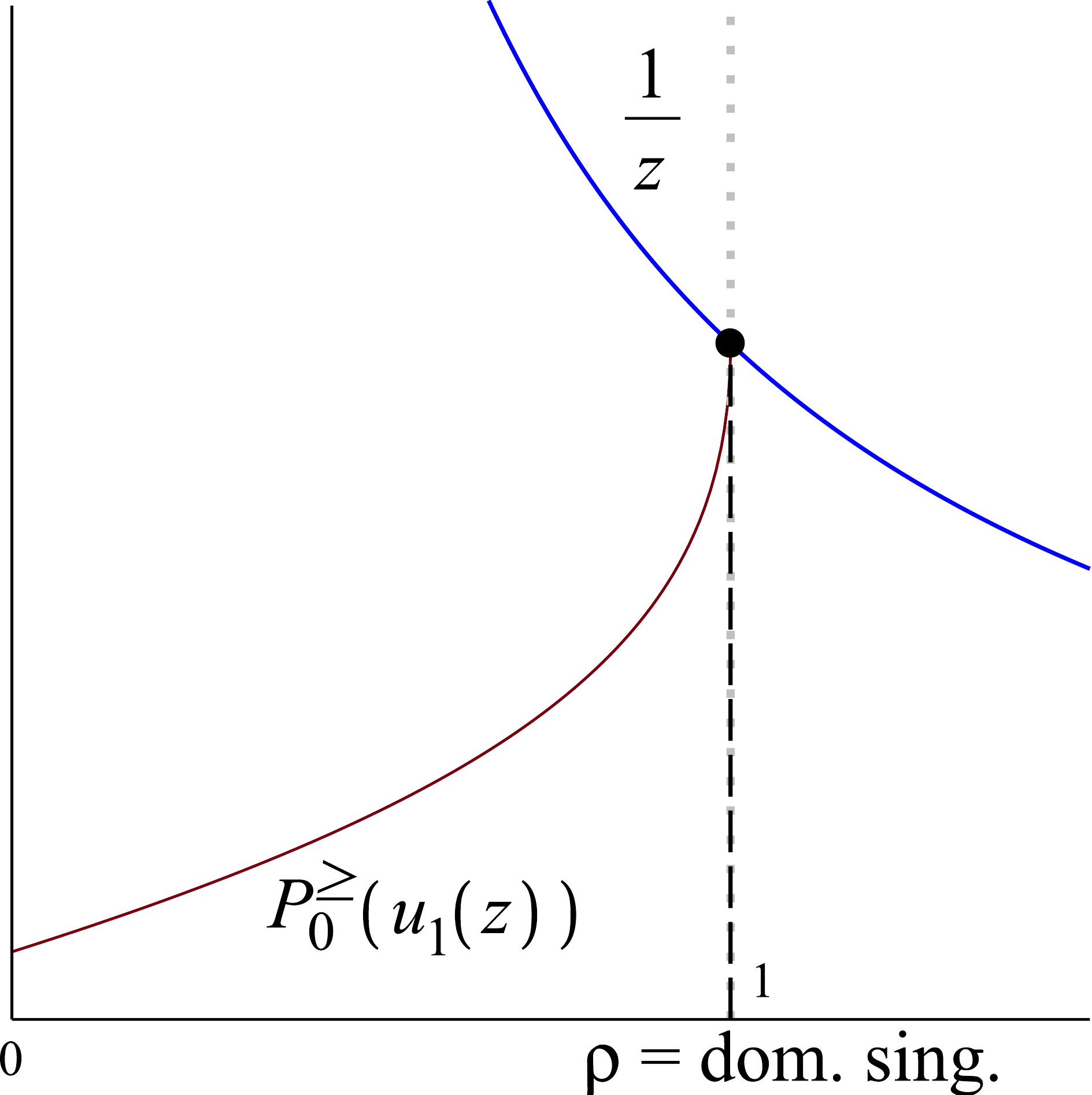}}
 \hfil
 \subfigure[subcritical case]{\includegraphics[width=0.32\textwidth]{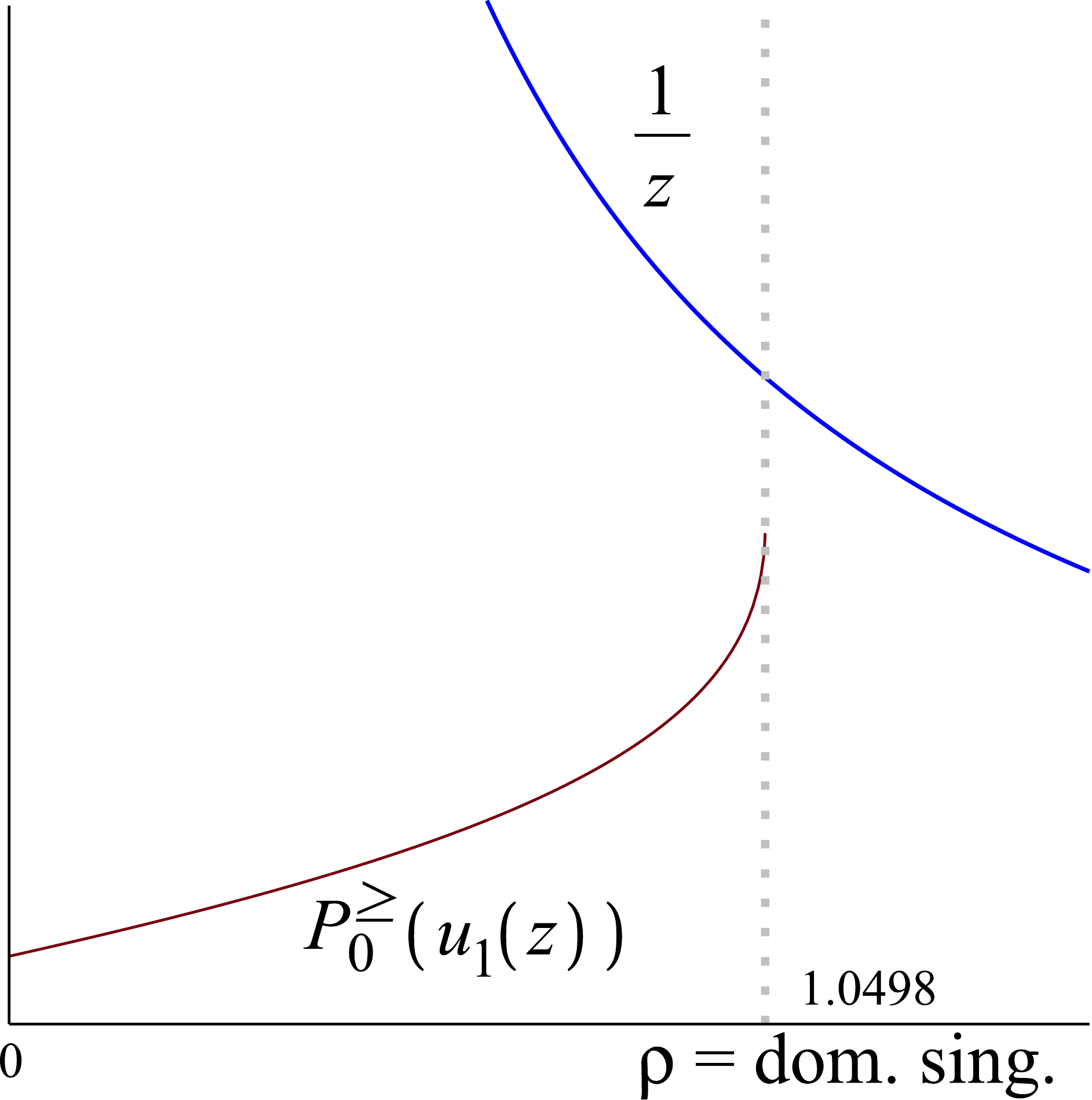}}
 \caption{Different singular behaviors of the generating function for the number of excursions. The increasing function represents $\PNgeq(u_1(z))$ where the decreasing function is $1/z$. The dotted line is at abscissa $\rho$ and the dashed line marks the dominant singularity. The latter is either located at the intersection or at $\rho$.}
 \label{fig:lukSingularBehavior}
 \end{center}
\end{figure}

\begin{theo}[Asymptotics of excursions]
 \label{theo:lukExcursions}
 Let $\tau$ be the structural constant determined by $P'(\tau)=0$, $\tau>0$, let $\rho = 1/P(\tau)$ be the structural radius and $\rho_1$ defined as in Lemma~\ref{lem:lukA0Denom}.
Define the constants $\alpha = \left. (\PNgeq(u_1(z)))' \right|_{z=\rho_1}$, $\gamma=\frac{1}{\alpha \rho_1^2+1}$, and $\kappa = C \rho (\PNgeq)'(\tau)$.
 The excursions in the reflection-absorption model possess the following asymptotic expansion:
 \begin{align}
 \label{eq:lukExcursionsFinal}
 E(z) &= 
 \begin{cases}
 \gamma (1-z/\rho_1)^{-1} + \LandauO(1), & \text{ supercritical case: } P(\tau)<\PNgeq(\tau),\\
 \frac{1}{\kappa} (1-z/\rho)^{-1/2} + \LandauO(1), & \text{ critical case: } P(\tau)=\PNgeq(\tau),\\
 E(\rho) - E(\rho)^2 \kappa (1-z/\rho)^{1/2} + \LandauO(1-z/\rho), & \text{ subcritical case: } P(\tau)>\PNgeq(\tau).
 \end{cases}
 \end{align}
\end{theo}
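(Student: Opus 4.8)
The plan is to start from the closed form $E(z) = 1/(1 - z\PNgeq(u_1(z)))$ in \eqref{eq:lukA0} and to read off its dominant singularity by combining the square-root singularity of $u_1(z)$ recorded in \eqref{eq:asyu1} with the possible vanishing of the denominator described in Lemma~\ref{lem:lukA0Denom}. The first step is to make the trichotomy explicit. Since $u_1(\rho)=\tau$ and $\rho = 1/P(\tau)$, evaluating the denominator at $z=\rho$ gives $1 - \rho\PNgeq(\tau) = (P(\tau) - \PNgeq(\tau))/P(\tau)$. Hence the sign of $P(\tau) - \PNgeq(\tau)$ decides whether the root $\rho_1$ of Lemma~\ref{lem:lukA0Denom} lies strictly inside $(0,\rho)$ (supercritical, denominator negative at $\rho$), exactly at $\rho$ (critical), or does not exist in that interval (subcritical, denominator positive at $\rho$), matching the three configurations of Figure~\ref{fig:lukSingularBehavior}.

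In the supercritical case the dominant singularity is the simple pole at $\rho_1 < \rho$, where $u_1$ is still analytic. Writing $D(z) = 1 - z\PNgeq(u_1(z))$, I would compute $D'(\rho_1) = -\PNgeq(u_1(\rho_1)) - \rho_1\alpha = -1/\rho_1 - \rho_1\alpha$, using $\PNgeq(u_1(\rho_1)) = 1/\rho_1$ together with the definition $\alpha = \left.(\PNgeq(u_1(z)))'\right|_{z=\rho_1}$. A local expansion $E(z) \sim 1/(D'(\rho_1)(z-\rho_1))$ then produces the residue $-1/(\rho_1 D'(\rho_1)) = 1/(1 + \alpha\rho_1^2) = \gamma$, which is exactly the stated pole term.

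For the critical and subcritical cases the singularity at $\rho$ is inherited from $u_1$. The plan is to substitute \eqref{eq:asyu1} into $\PNgeq$ and Taylor-expand, obtaining $\PNgeq(u_1(z)) = \PNgeq(\tau) - C(\PNgeq)'(\tau)\sqrt{1-z/\rho} + \LandauO(1-z/\rho)$. Multiplying by $z = \rho(1 - (1-z/\rho))$ and abbreviating $X := 1-z/\rho$, the denominator becomes $D(z) = (1 - \rho\PNgeq(\tau)) + \kappa\sqrt{X} + \LandauO(X)$, with $\kappa = C\rho(\PNgeq)'(\tau)$ precisely the constant in the statement. In the critical case the constant term $1 - \rho\PNgeq(\tau)$ vanishes, so $D(z)\sim\kappa\sqrt{X}$ and inversion gives $E(z) = \kappa^{-1}(1-z/\rho)^{-1/2} + \LandauO(1)$. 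In the subcritical case the constant term is positive and equals $1/E(\rho)$; expanding $E(z) = 1/D(z)$ to first order in $\sqrt{X}$ yields $E(\rho) - E(\rho)^2\kappa\sqrt{X} + \LandauO(X)$, as claimed. Finally, the transfer theorems of singularity analysis (see~\cite{flaj09}) translate each local expansion into coefficient asymptotics.

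The main obstacle I anticipate is not any of these routine expansions but the bookkeeping needed to guarantee that $\rho$ (respectively $\rho_1$) is the \emph{unique} dominant singularity and that $E$ is analytically continuable to a $\Delta$-domain, as required for singularity analysis. This amounts to ruling out further roots of $D$ on the circle $|z|=\rho$ and to controlling $u_1(z)$ there. It reduces to the monotonicity of $u_1$ and $\PNgeq$ on $[0,\rho)$ already exploited in Lemma~\ref{lem:lukA0Denom}, combined with the aperiodicity of $P$, which ensures that $u_1(z)$ itself has no competing singularity on its circle of convergence; once this is in place, the three local behaviors transfer directly to the asymptotics of $E(z)$.
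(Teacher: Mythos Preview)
Your proposal is correct and follows exactly the approach sketched in the paper: identify the dominant singularity according to the trichotomy of Lemma~\ref{lem:lukA0Denom}, and in each case feed the Puiseux expansion \eqref{eq:asyu1} of $u_1(z)$ into the closed form \eqref{eq:lukA0} to extract the local behavior of $E(z)$. Your explicit residue and $\sqrt{X}$ computations are precisely the details the paper omits behind the phrase ``the analysis of the corresponding Puiseux expansion yields the result'', and your closing paragraph on $\Delta$-domain analyticity and aperiodicity correctly names the only nontrivial point the sketch takes for granted.
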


\begin{proof} [(Sketch)]
 We investigate $E(z)$ by means of singularity analysis. Therefore, three different cases are distinguished as the dominant singularity depends on the singular behavior (as illustrated by Figure~\ref{fig:lukSingularBehavior}). 
Then, the analysis of the corresponding Puiseux expansion yields the result.
\end{proof}

\section{Limit laws for the number of returns to zero}
\label{sec:contacts}

An \emph{arch} is defined as an excursion of size $>0$ whose only contact with the $x$-axis is at its end points. We denote the set of arches by $\Ac$. Every excursion (set $\Ec$) consists of a sequence of arches, i.e.,~$\Ec = \text{SEQ}(\Ac)$.

The symbolic method (see e.g.~\cite{flaj09}) directly provides the functional equation
\begin{align}
 \label{eq:arch2exc}
 E(z) = \frac{1}{1-A(z)},
\end{align}
which is easily solved to give the generating function for arches $A(z) = 1 - \tfrac{1}{E(z)}$.

\begin{prop}[Asymptotics of arches]
For a {\L}ukasiewicz walk, the number of arches satisfies
 \begin{align}
 [z^n] A(z) \underset{n \to \infty }{\sim}& \frac{\kappa}{2 \sqrt{\pi n^3}}\,,
\text{\qquad where $\kappa = C \rho (\PNgeq)'(\tau)$.}
 \end{align}
\end{prop}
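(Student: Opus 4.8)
The plan is to exploit the very simple closed form that $A(z)$ inherits from the excursion generating function and then to read off the asymptotics by singularity analysis. First I would substitute the kernel form of $E$ into the relation $A(z)=1-1/E(z)$ coming from $\Ec=\mathrm{SEQ}(\Ac)$ in \eqref{eq:arch2exc}: since \eqref{eq:lukA0} gives $1/E(z)=1-z\PNgeq(u_1(z))$, one obtains at once
\begin{align*}
 A(z) = z\,\PNgeq(u_1(z)).
\end{align*}
Thus all the analytic information about arches is carried by the single small branch $u_1(z)$ and by the polynomial boundary characteristic $\PNgeq$, which is exactly what makes the estimate so clean.

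Next I would locate and analyse the dominant singularity. Since $\PNgeq$ is a polynomial, the only singularity of $A$ stems from $u_1(z)$, whose dominant singularity is the square-root branch point at $z=\rho$ recorded in \eqref{eq:asyu1}. Plugging $u_1(z)=\tau - C\sqrt{1-z/\rho}+\LandauO(1-z/\rho)$ into $A(z)=z\PNgeq(u_1(z))$, expanding $\PNgeq$ to first order around $u=\tau$, and writing $z=\rho\bigl(1-(1-z/\rho)\bigr)$, I would get
\begin{align*}
 A(z) = \rho\,\PNgeq(\tau) - \kappa\,\sqrt{1-z/\rho} + \LandauO(1-z/\rho),
\end{align*}
where the coefficient of the square root is precisely $-\kappa=-C\rho\,(\PNgeq)'(\tau)$, the constant already introduced in Theorem~\ref{theo:lukExcursions}. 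The constant term $\rho\PNgeq(\tau)$ is analytic and contributes nothing to $[z^n]$ for $n\geq1$, so the $\sqrt{1-z/\rho}$ term alone governs the asymptotics.

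Finally I would invoke the transfer theorem of singularity analysis (see~\cite{flaj09}). Using $[z^n](1-z/\rho)^{1/2}\sim \rho^{-n}n^{-3/2}/\Gamma(-1/2)$ together with $\Gamma(-1/2)=-2\sqrt{\pi}$, the square-root term yields $[z^n]A(z)\sim \rho^{-n}\,\kappa/(2\sqrt{\pi n^3})$, i.e.\ the announced $\kappa/(2\sqrt{\pi n^3})$ up to the exponential growth rate $\rho^{-n}$. The main obstacle is not the computation but the analytic bookkeeping needed to license the transfer theorem: one must verify that $u_1(z)$, hence $A$, is $\Delta$-analytic, i.e.\ continues analytically to a slit neighbourhood of $\rho$, and that $\rho$ is the unique dominant singularity on $|z|=\rho$. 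The first point is supplied by the Newton--Puiseux study of the small branch in \cite{bafl02}, while the second is exactly where the aperiodicity of the {\L}ukasiewicz step set is used, ruling out competing singularities of modulus $\rho$ and guaranteeing a single $n^{-3/2}$ contribution.
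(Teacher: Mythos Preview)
Your proposal is correct and follows exactly the paper's route: deduce $A(z)=z\,\PNgeq(u_1(z))$ from \eqref{eq:arch2exc} and \eqref{eq:lukA0}, insert the Puiseux expansion \eqref{eq:asyu1} of $u_1$ to obtain $A(z)=\rho\PNgeq(\tau)-\kappa\sqrt{1-z/\rho}+\LandauO(1-z/\rho)$, and transfer. Your remark that the displayed asymptotic should carry an extra factor $\rho^{-n}$ is also correct; this is simply an omission in the statement of the proposition, and the paper's own sketch stops at the local expansion without writing the coefficient asymptotics explicitly.
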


\begin{proof}[(Sketch)]
 Define $\lambda := \tfrac{\PNgeq(\tau)}{P(\tau)} = \tfrac{\rho}{\rNgeq}$, then by \eqref{eq:lukA0} we get for $z \to \rho-$ that 
 $ A(z) = z \PNgeq(u_1(z)) 
 = \lambda - \kappa \sqrt{1-z/\rho} + \LandauO(1-z/\rho)$.
\end{proof}

A \emph{return to zero} is a vertex of a path of altitude $0$ whose abscissa is positive, i.e.,~the number of returns to zero is the number of times the abscissa is touched again after leaving the origin. 
In order to count the number of returns to zero of excursions of fixed size $n$, we can reverse the construction above for the generating function of arches. The generating function for excursions with exactly $k$ returns to zero is equal to $A(z)^{k}$. 
As stated in \cite{bafl02}, for any fixed $k$, this function also has a singularity of the square root type and is amenable to singularity analysis. Hence, we are able to derive the probability $\PR_{n,k}$ that a random excursion of size $n$ has exactly $k$ returns to zero for any fixed $k$:
\begin{align}
 \label{eq:probCon}
 \PR_{n,k} := \PR[\text{size}=n,~\text{\# returns to zero}=k] = \frac{[z^n]A(z)^{k}}{[z^n] E(z)}.
\end{align}
Let $X_n$ be the random variable for the number of arches among all excursions of size $n$. 
Note that $X_n$ also represents the returns to zero of a random excursion of size $n$. 

\begin{theo}[Limit laws for returns to zero]
 \label{theo:lawContacts}
 Additionally to the previously used constants $\alpha, \gamma$ and $\kappa$, we define $\alpha_2 = \left. (\PNgeq(u_1(z)))'' \right|_{z=\rho_1}$. 
 The number $X_n$ of returns to zero of a random excursion of size $n$ admits a limit distribution:
 \begin{enumerate}
  \item In the supercritical case, i.e.,~$P(\tau) < \PNgeq(\tau)$, 
 \begin{align*}
 \frac{X_n - \mu n}{\sigma \sqrt{n}}, \qquad \mu = \gamma, \qquad \sigma = \alpha_2 (\rho_1 \gamma)^3 - \gamma + \gamma^2(\rho_1+2)-2\gamma ^3,
 \end{align*}
 converges in law to a Gaussian variable $N(0,1)$.
  \item In the critical case, i.e.,~$P(\tau) = \PNgeq(\tau)$, the normalized random variable
 $\frac{\kappa}{\sqrt{2 \pi}} (X_n-1)$,
 converges in law to a Rayleigh distribution defined by the density $x e^{-x^2/2}$.
  \item In the subcritical case, i.e.,~$P(\tau) > \PNgeq(\tau)$, the limit distribution of $X_n-1$ is a discrete limit law,
namely the negative binomial distribution $\operatorname{NegBin}(2, \lambda)$, with $\lambda=\rho/\rho_0^{\geq}$:
 $${\mathbb P}(X_n-1=k) \sim (k+1) \lambda^k (1-\lambda)^2\,.$$
 \end{enumerate}
\end{theo}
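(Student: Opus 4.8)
The plan is to introduce a bivariate generating function marking each arch, and to recognise the whole statement as an instance of the \emph{sequence schema} of \cite{flaj09}. Since every excursion is a sequence of arches ($\Ec=\operatorname{SEQ}(\Ac)$) and an excursion with $k$ returns to zero is exactly an ordered sequence of $k$ arches, the generating function in which $u$ records the number of returns is
\[
  E(z,u)=\sum_{k\ge 0}A(z)^k u^k=\frac{1}{1-uA(z)},
\]
so that $\PR_{n,k}$ in \eqref{eq:probCon} is read off from its coefficients. Everything is then driven by the singular expansion of $A$ from the Proposition on arches, namely $A(z)=\lambda-\kappa\sqrt{1-z/\rho}+\LandauO(1-z/\rho)$ with $\lambda=A(\rho)=\PNgeq(\tau)/P(\tau)$. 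The trichotomy is governed by the position of $\lambda$ relative to $1$, i.e.\ by the sign of $\PNgeq(\tau)-P(\tau)$, which matches the three regimes of Theorem~\ref{theo:lukExcursions}.

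In the supercritical case $\lambda>1$ the equation $uA(z)=1$ has, for $u$ near $1$, a unique root $\rho(u)$ with $\rho(1)=\rho_1<\rho$; it is a simple pole of $E(z,u)$ depending analytically on $u$ and remaining the dominant (and, by aperiodicity of the {\L}ukasiewicz steps, isolated) singularity. Hence $[z^n]E(z,u)\sim c(u)\rho(u)^{-n}$ uniformly for $u$ near $1$, a quasi-powers situation, and Hwang's quasi-powers theorem gives asymptotic normality. I would extract the constants by implicit differentiation of $uA(\rho(u))=1$: using $A'(\rho_1)=1/\rho_1+\rho_1\alpha$ one finds $\rho'(1)=-1/A'(\rho_1)$ and $\mu=-\rho'(1)/\rho(1)=\gamma$, while differentiating once more brings in $A''(\rho_1)$, hence $\alpha_2$, and produces the stated $\sigma$. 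The only genuine check here is the variability condition $\sigma>0$.

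In the subcritical case $\lambda<1$ the situation is simplest: for $u$ near $1$ we have $uA(\rho)=u\lambda<1$, so no pole enters the disk and the dominant singularity stays at the fixed square-root branch point $z=\rho$. Expanding $A(z)^k$ and $E(z)$ about $\rho$ and applying the transfer theorem gives $[z^n]A(z)^k\sim k\lambda^{k-1}\kappa\,\rho^{-n}/(2\sqrt{\pi n^3})$ and $[z^n]E(z)\sim \kappa(1-\lambda)^{-2}\rho^{-n}/(2\sqrt{\pi n^3})$, so for each fixed $k$ the ratio \eqref{eq:probCon} tends to $k\lambda^{k-1}(1-\lambda)^2$, i.e.\ $\PR(X_n-1=k)\to(k+1)\lambda^k(1-\lambda)^2$. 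Recognising the limiting probability generating function $\bigl((1-\lambda)/(1-\lambda u)\bigr)^2$ identifies the discrete limit as $\operatorname{NegBin}(2,\lambda)$.

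The critical case $\lambda=1$ is the real obstacle, since the moving pole of the sequence coalesces with the square-root branch point as $u\to1$: here $1-uA(z)=(1-u)+u\kappa\sqrt{1-z/\rho}+\LandauO(1-z/\rho)$, so neither a pure pole analysis nor a pure branch-point analysis applies. The plan is a coupled scaling tying $u-1$ to the $\sqrt{1-z/\rho}$ scale; consistently, the mean number of returns grows like $\sqrt{n}$ (from $\partial_u E|_{u=1}\sim\kappa^{-2}(1-z/\rho)^{-1}$ one gets $\E[X_n]\sim\sqrt{\pi n}/\kappa$), so $X_n$ must be normalised on the $\sqrt n$ scale. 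I would either compute the moments $\E[X_n^r]$ by repeated differentiation in $u$ followed by singularity analysis and match them to the moments of the Rayleigh law $x e^{-x^2/2}$, or perform a transfer/saddle-point analysis of $E(z,u)$ with $u$ coupled to $n$ to obtain convergence of the characteristic function. Either route yields the Rayleigh limit. Controlling this coalescence uniformly is the delicate point, and it is exactly the ``new critical composition'' flagged in the introduction.
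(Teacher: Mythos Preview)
Your proposal is correct and follows exactly the route the paper has in mind: the extended abstract omits the proof of this theorem, but the Conclusion explicitly records that the three cases are handled via the continuity theorem for discrete laws, Hwang's quasi-powers theorem, and the semi-large powers theorem of \cite{bfss01}, applied to the sequence schema $E(z,u)=1/(1-uA(z))$ with the square-root expansion of $A$ at $\rho$. Your computations of $\mu=\gamma$ in the supercritical case and of the $\operatorname{NegBin}(2,\lambda)$ limit in the subcritical case are accurate, and your diagnosis of the critical case as a pole/branch-point coalescence on the $\sqrt n$ scale is precisely what the semi-large powers theorem is designed for.
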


\section{Asymptotics of meanders}
\label{sec:meanders}

A \emph{meander} is the natural generalization of an excursion, as it is defined as a directed walk confined to the upper half plane. 
We want to investigate the number of meanders or equivalently the ratio of meanders among all walks. This is a way to measure the effect of removing the constraint of ending on the $x$-axis.

\begin{theo}[Asymptotics of meanders]
 \label{theo:lukMeanders}
 Consider {\L}ukasiewicz walks in the absorption model. 

 The asymptotic behavior of the ratio of meanders of size $n$ is given in Table \ref{tab:lukMeanders}.
 \begin{table}[htbp]
 \begin{center}
 \begin{tabular}{|c||Sc|Sc|Sc|}
 \hline $[z^n]M(z) \sim$ & $\delta < 0$ & $\delta = 0$ & $\delta > 0$ \\ 
 \hline\hline Supercritical & \begin{tabular}{c}$\displaystyle \frac{\rho_1 \gamma}{E(1)(\rho_1-1)} \rho_1^{-n}$\end{tabular} 
 & --- 
 & \\
 \cline{1-3} Critical & $\displaystyle \frac{\rho}{E(1) \kappa(\rho-1)} \frac{\rho^{-n}}{\sqrt{\pi n}}$ & --- & $1-(1-\PNgeq(1) ) E(1)$\\
 \cline{1-3} Subcritical & $\displaystyle\frac{E(\rho)^2}{E(1)} \frac{\kappa \rho}{2(\rho-1)} \frac{\rho^{-n}}{\sqrt{\pi n^3}}$ & $\displaystyle \frac{E(1) \kappa}{\sqrt{\pi n}}$ & \\
 \hline
 \end{tabular}
 \end{center}
 \caption{Asymptotic ratio of meanders in the absorption model ($\PNgeq(1)<1$) with the structural constant $\tau >0$, $P'(\tau)=0$, the structural radius $\rho = 1/P(\tau)$ and the drift $\delta = P'(1)$. The constant $\rho_1$ is defined in Lemma \ref{lem:lukA0Denom}, whereas $\gamma$ and $\kappa$ are given in Theorem \ref{theo:lukExcursions}. The two missing cases for $\delta=0$ are not possible in the absorption model.}
 \label{tab:lukMeanders}
 \end{table}
\end{theo}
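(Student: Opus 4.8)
The plan is to start from the closed form for meanders and reduce everything to the already-understood excursion generating function $E(z)$. Specialising Theorem~\ref{theo:lukBGF} to the {\L}ukasiewicz case $c=1$ and setting $u=1$ in \eqref{eq:lukBGFalgebraic}, with $r_0(1)=P(1)-\PNgeq(1)=1-\PNgeq(1)$, I obtain
\begin{align*}
 M(z) = F(z,1) = \frac{1-z\,(1-\PNgeq(1))\,E(z)}{1-z}.
\end{align*}
Thus $M$ has only two possible sources of dominant singularity: the simple pole of $\tfrac{1}{1-z}$ at $z=1$, and the singularity of $E(z)$ described in Theorem~\ref{theo:lukExcursions}, located at $\rho_1$ (a pole, supercritical case) or at $\rho$ (a square-root singularity, critical/subcritical cases). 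Since $P$ is convex with $P(1)=1$ and its minimum at $\tau$, one has $\rho=1/P(\tau)\ge 1$, with equality precisely when $\delta=P'(1)=0$ (i.e.\ $\tau=1$). The whole theorem is then a competition between $z=1$ and the $E$-singularity, arbitrated by the sign of $\delta$.

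The crucial — and most delicate — step is to decide whether the factor $1-z$ in the denominator is genuinely singular or is cancelled by the numerator. I will show that $u_1(1)=1$ whenever $\delta\le 0$. Indeed $u_1(z)$ increases from $0$ to $u_1(\rho)=\tau$ as $z$ runs over $[0,\rho]$ and solves $P(u_1(z))=1/z$; for $\delta\le 0$ we have $\tau\ge 1$ and $1\le\rho$, so $u_1(1)$ is the unique root of $P(u)=1$ lying in $(0,\tau]$, which is exactly $u=1$ because $P(1)=1$. Consequently $E(1)=\bigl(1-\PNgeq(u_1(1))\bigr)^{-1}=\bigl(1-\PNgeq(1)\bigr)^{-1}$ by \eqref{eq:lukA0}, so the numerator $1-z(1-\PNgeq(1))E(z)$ vanishes at $z=1$ and the pole is removed. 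For $\delta>0$, by contrast, $\tau<1$ forces $u_1(1)<1$, hence $(1-\PNgeq(1))E(1)<1$ and the pole at $z=1$ persists. Establishing this dichotomy cleanly is the main obstacle; everything else is routine singularity analysis.

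With the dichotomy in hand the case $\delta>0$ is immediate: the $E$-singularity exceeds $1$ (for the supercritical regime this is the statement $\rho_1>1$, which follows since $z\mapsto z\PNgeq(u_1(z))$ is increasing and equals $\PNgeq(u_1(1))<1$ at $z=1$), so the simple pole at $z=1$ is the unique dominant singularity. Writing $M(z)=\tfrac{N(1)}{1-z}+\tfrac{N(z)-N(1)}{1-z}$ with $N(z)=1-z(1-\PNgeq(1))E(z)$ analytic at $1$, the second summand is analytic in a larger disc, and the transfer theorem yields $[z^n]M(z)\to N(1)=1-(1-\PNgeq(1))E(1)$, the single entry filling the whole $\delta>0$ column.

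For $\delta<0$ the pole cancels, so the dominant singularity is that of $E$, and I substitute the three Puiseux expansions of Theorem~\ref{theo:lukExcursions} into $M(z)=N(z)/(1-z)$, where now $1/(1-z)$ is analytic and nonzero at the singularity. This turns the $(1-z/\rho_1)^{-1}$, $(1-z/\rho)^{-1/2}$ and $(1-z/\rho)^{1/2}$ behaviours of $E$ into the corresponding singular parts of $M$, and singularity analysis gives the three $\delta<0$ entries, the constants matching the table after using $1-\PNgeq(1)=1/E(1)$ from the second paragraph. The confluent case $\delta=0$ is the last subtlety: here $\rho=1$ coincides with the pole, and only the subcritical regime can occur, since supercritical and critical would demand $\PNgeq(\tau)=\PNgeq(1)>1$ and $=1$ respectively, both excluded by $\PNgeq(1)<1$ in the absorption model. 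Expanding the numerator at $z=1=\rho$, the constant term vanishes (pole cancellation) while the leading surviving term is of order $(1-z)^{1/2}$; dividing by $1-z$ produces a $(1-z)^{-1/2}$ singularity whose transfer gives $\tfrac{E(1)\kappa}{\sqrt{\pi n}}$, completing the $\delta=0$ column.
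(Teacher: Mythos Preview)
Your proof is correct and follows essentially the same route as the paper: derive the closed form $M(z)=\bigl(1-z(1-\PNgeq(1))E(z)\bigr)/(1-z)$, import the three singular expansions of $E$ from Theorem~\ref{theo:lukExcursions}, and sort out the interaction with the factor $1/(1-z)$ according to the sign of~$\delta$. The paper organises the last step slightly differently, writing $[z^n]M(z)=1-(1-\PNgeq(1))\bigl([z^n]\tfrac{E(z)}{1-z}-[z^n]E(z)\bigr)$ and analysing $E(z)/(1-z)$ as a product, whereas you analyse $M(z)$ directly and make explicit the mechanism behind the case split: $u_1(1)=1$ exactly when $\delta\le 0$, hence $E(1)=(1-\PNgeq(1))^{-1}$ and the apparent pole at $z=1$ cancels. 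This observation is left implicit in the paper's sketch but is precisely what drives the distinction of cases there as well, so the two arguments are the same in substance.
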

\vskip-1mm\begin{proof}[(Sketch)]
From \eqref{eq:genFuncEq2}, we get the bivariate generating function for meanders as
\begin{align}
 F(z,u) &= \frac{1 - z \left( P(u) - \PNgeq(u) \right) E(z)}{1-zP(u)}. \label{eq:lukBGForig} 
\end{align}
Hence, the generating function $M(z)$ for meanders is given by substituting $u=1$ in \eqref{eq:lukBGForig}:
\begin{align}
 \label{luk:FinAltFull}
 M(z) &:= F(z,1) = \frac{1}{1-z} - \left(1 - \PNgeq(1) \right) \frac{zE(z)}{1-z}. 
\end{align}
An elementary simplification of the last factor gives
\begin{align}
 \label{eq:lukMeandersParts}
 [z^n] M(z) &= 1 - \left(1 - \PNgeq(1) \right) \left([z^n] \frac{E(z)}{1-z} - [z^n]E(z) \right).
\end{align}
The asymptotics of the last term are known from Section \ref{sec:Luk}. For the function $\tfrac{E(z)}{1-z}$, we use \eqref{eq:lukExcursionsFinal} and elementary singularity analysis, 
like \cite[Fig. VI.5]{flaj09}. The result follows by distinguishing all different cases.
\end{proof}

\begin{remarkFormulaEnd} \label{rem:meandersReflAbs}
Formula \eqref{luk:FinAltFull} possesses a combinatorial interpretation:
a walk can only be absorbed after hitting the $x$-boundary, and at this place the walk is thus an excursion. 
Let~$e_n$ be the probability that a random walk of length $n$ is an excursion. A walk survives with probability $\PNgeq(1)$ and is killed with probability $(1-\PNgeq(1))$. The probability $m_{n+1}$ describing the number of meanders of length $n+1$ among all walks of length $n+1$ is given by all surviving walks of smaller length: 
$\displaystyle{ m_{n+1} = 1 - \left( 1 - \PNgeq(1) \right) \sum_{k = 0}^n e_k.}$ 
\end{remarkFormulaEnd}

\section{Final altitude of meanders}
\label{sec:finAlt}

The \emph{final altitude} of a path is defined as the ordinate of its endpoint. 

Let $X_n$ be the random variable associated to the final altitude of all meanders of length $n$. It satisfies
\begin{align}
 \label{eq:lukXnFinAlt}
 \PR[X_n = k] &= \frac{[z^n u^k] F(z,u)}{[z^n] F(z,1)}\,,
\end{align}
where $F(z,u)$ is the bivariate generating function for meanders from \eqref{eq:lukBGForig}.

\begin{theo}[Final altitude of meanders]
 \label{theo:lukMeanFinAlt}
 Consider the model of {\L}ukasiewicz walks. Let $\tau$ be the structural constant determined by $P'(\tau)=0$, $\tau>0$, $\delta = P'(1)$ be the drift and $\dNgeq = (\PNgeq)'(1)$ be the drift at $0$. 
The limit laws and the asymptotics of the expected final altitude of meanders for the reflection model are given in Table~\ref{tab:lukAsyMeanFinAltRef} and the ones for the absorption model are given in Table~\ref{tab:lukAsyMeanFinAltAbs}. 

 \begin{table}[htbp]
 \begin{center}
 \begin{tabular}{|c||Sc|Sc|Sc|}
 \hline & $\delta < 0$ & $\delta = 0$ & $\delta > 0$ \\
 \hline \hline Limit law & Discrete & Half-normal & Gaussian \\
 \hline\hline Supercritical & $\displaystyle{\E[X_n] \sim \frac{\dNgeq P''(1) + \delta {P_0^{\geq}}''(1)}{2 \delta (\delta - \dNgeq)}} $ & --- & \\
 \cline{1-3} Critical & --- & $\displaystyle{ \E[X_n] \sim\sqrt{\frac{2P''(1) n}{\pi}}}$& $\displaystyle{ \E[X_n] \sim\delta n}$ \\
 \cline{1-3} Subcritical & --- & --- & \\
 \hline
 \end{tabular}
 \end{center}
 \caption{Asymptotics of $\E[X_n]$ in the reflection model (i.e.,~$\PNgeq(1)=1$). The unfilled cases are not occurring under this model.} 
 \label{tab:lukAsyMeanFinAltRef}
 \end{table}

 \begin{table}[htbp]
 \begin{center}
 \begin{tabular}{|c||Sc|Sc|Sc|}
 \hline & $\delta < 0$ & $\delta = 0$ & $\delta > 0$ \\
 \hline \hline Limit law & Discrete & Rayleigh & Gaussian \\
 \hline\hline Supercritical & $\displaystyle{\E[X_n] \sim \left(1-\frac{1}{\rho_1}\right) \frac{E(1) F_u(\rho_1,1)}{E(\rho_1)}} $ & --- & \\
 \cline{1-3} Critical & $\displaystyle{\E[X_n] \sim \kappa \left(1-\frac{1}{\rho}\right) \frac{E(1) F_u(\rho,1)}{E(\rho)}}$ & --- & $\displaystyle{\E[X_n] \sim \delta n}$ \\
 \cline{1-3} Subcritical & $\displaystyle{\E[X_n] \sim r \left(1-\frac{1}{\rho}\right) \frac{ E(1) } { E(\rho) }}$ & $\displaystyle{\E[X_n] \sim \sqrt{\frac{ P''(1) \pi n}{2}}}$ & \\
 \hline
 \end{tabular}
 \end{center}
 \caption{Asymptotics of $\E[X_n]$ in the absorption model ($\PNgeq(1)<1$). The unfilled cases are not occurring under this model.
We denote by $F_u(\rho,1)$ the limit $z\to \rho$ and $u \to 1$ of the derivative of $F(z,u)$ with respect to $u$. Note that in $\frac{F_u(\rho,1)}{E(\rho)}$ the singularities at $z=\rho$ cancel and the limit exists. Furthermore, we have $r=F_u(\rho,1) - \frac{\delta \rho}{(1-\rho)^2}$.}
 \label{tab:lukAsyMeanFinAltAbs}
 \end{table}
\end{theo}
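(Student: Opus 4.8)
The plan is to extract both the limit laws and the expected values directly from the explicit bivariate generating function $F(z,u)$ of \eqref{eq:lukBGForig}. By \eqref{eq:lukXnFinAlt} the probability generating function of the final altitude is $\E[u^{X_n}]=[z^n]F(z,u)/[z^n]F(z,1)$, so the limit laws follow from the behaviour of $F(z,u)$ for $u$ near $1$, while the moments come from $u$-derivatives at $u=1$; in particular $\E[X_n]=[z^n]F_u(z,1)/[z^n]M(z)$ with $M(z)=F(z,1)$, whose coefficient asymptotics are already given by Theorem~\ref{theo:lukMeanders}. Writing $F=N/(1-zP(u))$ with $N(z,u)=1-z(P(u)-\PNgeq(u))E(z)$, differentiation yields
\begin{align*}
 F_u(z,1)&=\frac{-z(\delta-\dNgeq)E(z)}{1-z}+\frac{\bigl(1-z(1-\PNgeq(1))E(z)\bigr)z\delta}{(1-z)^2}.
\end{align*}
In the reflection model $\PNgeq(1)=1$ collapses the numerator of $M$, so $M(z)=1/(1-z)$ and $[z^n]M(z)=1$ (every walk is a meander), whereas in the absorption model $[z^n]M(z)$ is the quantity tabulated in Table~\ref{tab:lukMeanders}.

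The trichotomy is governed by a single structural fact. Along the small branch one has $u_1(1/P(u))=u$ for $0<u<\tau$, and substituting this into $N$ gives $N(1/P(u),u)\equiv 0$ there: the apparent kernel pole at $z=1/P(u)$ is \emph{spurious}, cancelled by a zero of the numerator, exactly when $u$ lies in the small-branch range $u<\tau$, and it is a genuine movable simple pole when $u>\tau$. Since $P$ is convex with minimum at $\tau$, we have $\tau\lessgtr 1\iff\delta\gtrless 0$, so the evaluation point $u=1$ sits above, at, or below $\tau$ precisely according to the sign of the drift. For $\delta>0$ (so $u=1>\tau$) the pole $\rho(u)=1/P(u)$ is genuine and dominant, varies analytically, and $N(\rho(u),u)\neq 0$; transfer of a simple pole gives $\E[u^{X_n}]\sim B(u)P(u)^{n+1}$ with $B$ analytic and $B(1)=1$, and Hwang's quasi-powers theorem (\cite[Ch.~IX]{flaj09}) yields a Gaussian law with mean $nP'(1)/P(1)=\delta n$ (the variability condition being routine to check). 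For $\delta<0$ (so $u=1<\tau$) the pole is spurious and the dominant singularity of $F(z,u)$ is the fixed, $u$-independent singularity of $E(z)$, at $\rho_1$ in the supercritical case; hence $\E[u^{X_n}]$ converges to a fixed analytic function of $u$, a discrete limit law, and $\E[X_n]$ tends to the ratio of residues there, giving the supercritical constants in Tables~\ref{tab:lukAsyMeanFinAltRef} and~\ref{tab:lukAsyMeanFinAltAbs}.

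The hard part is the critical drift $\delta=0$, where $u=1=\tau$ and $\rho=1$: the movable pole $1/P(u)$ coalesces at $u=1$ with the square-root branch point of $u_1(z)$ carried by $E(z)$, so no isolated singularity can be localised. Here I would perform a local double-scaling analysis near $(z,u)=(1,1)$, using $\delta=0$ to write $P(u)=1+\tfrac12 P''(1)(u-1)^2+\LandauO((u-1)^3)$, hence $1-zP(u)\approx(1-z)-\tfrac12 P''(1)(u-1)^2$, and setting $u=e^{t/\sqrt n}$, $1-z\sim s/n$. The shape of the numerator at the confluence selects the law: in the reflection model ($\PNgeq(1)=1$, forcing the critical case) the prefactor $P(u)-\PNgeq(u)$ vanishes at $u=1$ and multiplies the $(1-z)^{-1/2}$ singularity of $E(z)$, inserting an extra square-root into the scaling function and producing the half-normal density $\propto e^{-x^2/2}$; in the absorption model ($\PNgeq(1)<1$, forcing the sub-critical case) the numerator tends to a nonzero constant while $E(z)$ stays finite at $z=1$, leaving the pure confluent pole $1/\bigl((1-z)-c(u-1)^2\bigr)$ that gives the Rayleigh density $x\,e^{-x^2/2}$, as in the Banderier--Flajolet zero-drift meander analysis.

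For the expected altitudes I would insert the expansions of $E(z)$ from Theorem~\ref{theo:lukExcursions} into $F_u(z,1)$ and apply singularity analysis, then divide by $[z^n]M(z)$. For $\delta=0$ this is short: in the reflection (critical) case $F_u(z,1)\sim(\dNgeq/\kappa)(1-z)^{-3/2}$ gives $\E[X_n]\sim\sqrt{2P''(1)n/\pi}$ once $\kappa=\dNgeq\sqrt{2/P''(1)}$ is used, and in the absorption (sub-critical) case the leading term $\dNgeq E(1)/(1-z)$ divided by $[z^n]M(z)\sim E(1)\kappa/\sqrt{\pi n}$ gives $\E[X_n]\sim\sqrt{P''(1)\pi n/2}$; for $\delta>0$ the double-pole term contributes $\delta n$ in both models. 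The only laborious entries are the $\delta<0$ constants, where one evaluates $F_u$ and residues at $\rho_1$ (resp.\ $\rho$), using the cancellation of the $z=\rho$ singularity in $F_u(\rho,1)/E(\rho)$ noted under Table~\ref{tab:lukAsyMeanFinAltAbs} and the auxiliary constant $r=F_u(\rho,1)-\delta\rho/(1-\rho)^2$; these are bookkeeping once the singular structure above is fixed.
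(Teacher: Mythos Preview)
Your overall strategy coincides with the paper's: compute $\E[X_n]$ from $[z^n]F_u(z,1)/[z^n]M(z)$ via singularity analysis, and read off the limit laws from the dominant singularity of $F(z,u)$ near $u=1$ (quasi-powers for $\delta>0$, fixed singularity of $E$ for $\delta<0$, a confluent analysis for $\delta=0$). The trichotomy you isolate --- genuine movable pole when $u>\tau$, spurious pole when $u<\tau$ because $N(1/P(u),u)\equiv 0$ there --- is exactly the mechanism the paper appeals to, though the paper states it more tersely.

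There are two technical differences worth flagging. First, your expression for $F_u(z,1)$ is written purely in terms of $E(z)$, whereas the paper rewrites it using the identity $zE(z)\PNgeq(u_1(z))=E(z)-1$ to obtain
\[
F_u(z,1)=\dNgeq\,\frac{zE(z)}{1-z}+\delta\bigl(\PNgeq(1)-\PNgeq(u_1(z))\bigr)\frac{z^2E(z)}{(1-z)^2},
\]
and then introduces a separate Puiseux lemma for the composition $\PNgeq(u_1(z))$ near $z=1$ (and $z=\rho$). Your form sidesteps that lemma entirely, since the expansions of $E(z)$ from Theorem~\ref{theo:lukExcursions} suffice; this is a genuine simplification. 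Second, for the $\delta=0$ limit laws the paper does \emph{not} use your unified double-scaling: it invokes the semi-large powers theorem for the Rayleigh law in the absorption model, but for the half-normal in the reflection model it proceeds by the method of moments (computing all $[z^n]\partial_u^k F(z,u)|_{u=1}$ and matching them to the half-normal moments). Your local bivariate expansion near $(z,u)=(1,1)$ is a legitimate alternative route to the half-normal and is arguably more uniform, but you should be aware that making it rigorous for the half-normal requires some care beyond the standard semi-large-powers statement, which is presumably why the paper falls back on moments there.
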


\begin{proof}[(Sketch)]
 Definition \eqref{eq:lukXnFinAlt} leads to the following formula for the expected value:
 \begin{align}
 \label{eq:lukExpFinalAlt}
 \E [X_n] = \frac{[z^n] \left. \frac{\partial}{\partial u} F(z,u) \right|_{u=1}}{[z^n] F(z,1)}.
 \end{align}
 Differentiating the Formula \eqref{eq:lukBGForig} for $F(z,u)$  with respect to $u$ yields
 \begin{align}
 \label{eq:lukAltMean}
 \left. \frac{\partial}{\partial u} F(z,u) \right|_{u=1} &= (\PNgeq)'(1) \frac{zE(z)}{1-z} + P'(1) \left( \PNgeq(1) - \PNgeq(u_1(z) \right) \frac{z^2 E(z)}{(1-z)^2}.
 \end{align}
 As a next step, we evaluate the $[z^n]$-operator term by term. 
 Firstly, note that the quotient $\tfrac{E(z)}{(1-z)^\beta}$ appears twice with $\beta=1$ and $\beta=2$. We use \cite[Theorem VI.12]{flaj09}, which gives the asymptotics for two generating functions with different radii of convergence, and previous results for $E$ like Theorem \ref{theo:lukExcursions}. 
 Secondly, the following lemma gives the behavior of the composition $\PNgeq(u_1(z))$:
 \begin{lemma}[A Puiseux behavior lemma]
 \label{lem:lukPNgequ1}
 Let $\PNgeq$ be the non-negative part of $P_0$, and $u_1$ be the small branch of the kernel equation in the {\L}ukasiewicz case. Then
 \begin{align*}
 \PNgeq(u_1(z)) &= 
 \begin{cases}
 \PNgeq(u_1(1)) - \alpha (1-z) + \frac{\alpha_2}{2} (1-z)^2 + \Landauo((1-z)^2), & \text{ for } \rho > 1,\\
 \PNgeq(1) - \kappa \sqrt{1-z} + \LandauO(1-z), & \text{ for } \rho = 1,
 \end{cases}
 \end{align*}
 with $\alpha = \left(\PNgeq \circ u_1\right)'(1) = -(\PNgeq)'(u_1(1))/P'(u_1(1))$ and $\alpha_2 = \left(\PNgeq \circ u_1\right)''(1)$.
 \end{lemma}
 
 The computations of the asymptotics of the expected value are then finalized by considering that 
the denominator of \eqref{eq:lukExpFinalAlt} is either 1 in the reflection model 
or its asymptotics is given by Theorem \ref{theo:lukMeanders} in the absorption model. 
 
 We now turn to the derivation of the underlying limit laws.
 Let $u$ be a fixed positive real number in $(0,1)$. Then the dominant singularity of $F(z,u)$ 
 is either $z=\rho$, the singularity of $E(z)$, or $z=1/P(u)$, the singularity of the denominator (compare \eqref{eq:lukBGForig}).
 Which one is the dominant one?   This depends on the value of the drift $\delta$:

 \begin{itemize}
 \item If the drift is negative the dominant singularity is found at $z=\rho$.
By the ``continuity theorem of discrete laws'', Theorem IX.$1$ from \cite{flaj09}, this leads to a discrete limit law.
 \item If the drift is zero the dominant singularity is at $z=\rho=1/P(1)=1$. 
In the reflection model, we show by the method of moment convergence the appearance of a half-normal distribution (see~\cite{Wallner} for more on this distribution).
In the absorption model, the application of the ``semi-large powers theorem'' (see~\cite{bfss01}),
 leads to a Rayleigh limit law, see also~\cite{DrSo97}. 
 \item If the drift is positive in the needed environment of $u=1$, the dominant singularity is found at $z=1/P(u)$ (this is due to $\rho>1$ in this case) and
 by the application of the ``quasi-powers theorem'' (Theorem IX.$8$ from \cite{flaj09}),
 this leads to a Gaussian limit law. 
 \end{itemize}
 The special role of the drift and an intuition of the underlying limit laws 
can be obtained by considering the influence of the drift on the expected number of meanders 
in Theorem \ref{theo:lukMeanders}.
\end{proof}

\section{Conclusion}
In this article, we investigated the generating functions of lattice paths on $\N$ 
with an absorbing or reflecting border at 0. 
To sum it up, our analysis of this reflection-absorption model can be divided into three parts. 
Firstly, we determined the general formula for the generating function by means of the kernel method. 
Secondly, we focused on {\L}ukasiewicz walks (i.e.,~the family of lattice paths in bijection with trees), 
and derived asymptotic results on the number of meanders and excursions, by means of singularity analysis and transfer theorems from analytic combinatorics. 
Thirdly, we investigated the limit laws for the returns to zero and the final altitude by utilizing schemes on generating functions which yield the convergence of the underlying distributions. 
Therein, we applied the ``continuity theorem of discrete laws''~\cite[Theorem IX.$1$]{flaj09}, the ``quasi-powers theorem''~\cite[Theorem IX.$8$]{flaj09}, and the ``semi-large powers theorem'' \cite{bfss01}.
Numerical values confirm our results (see figure below).
\begin{figure}[ht]
 \begin{center}
 {\includegraphics[width=0.6\textwidth]{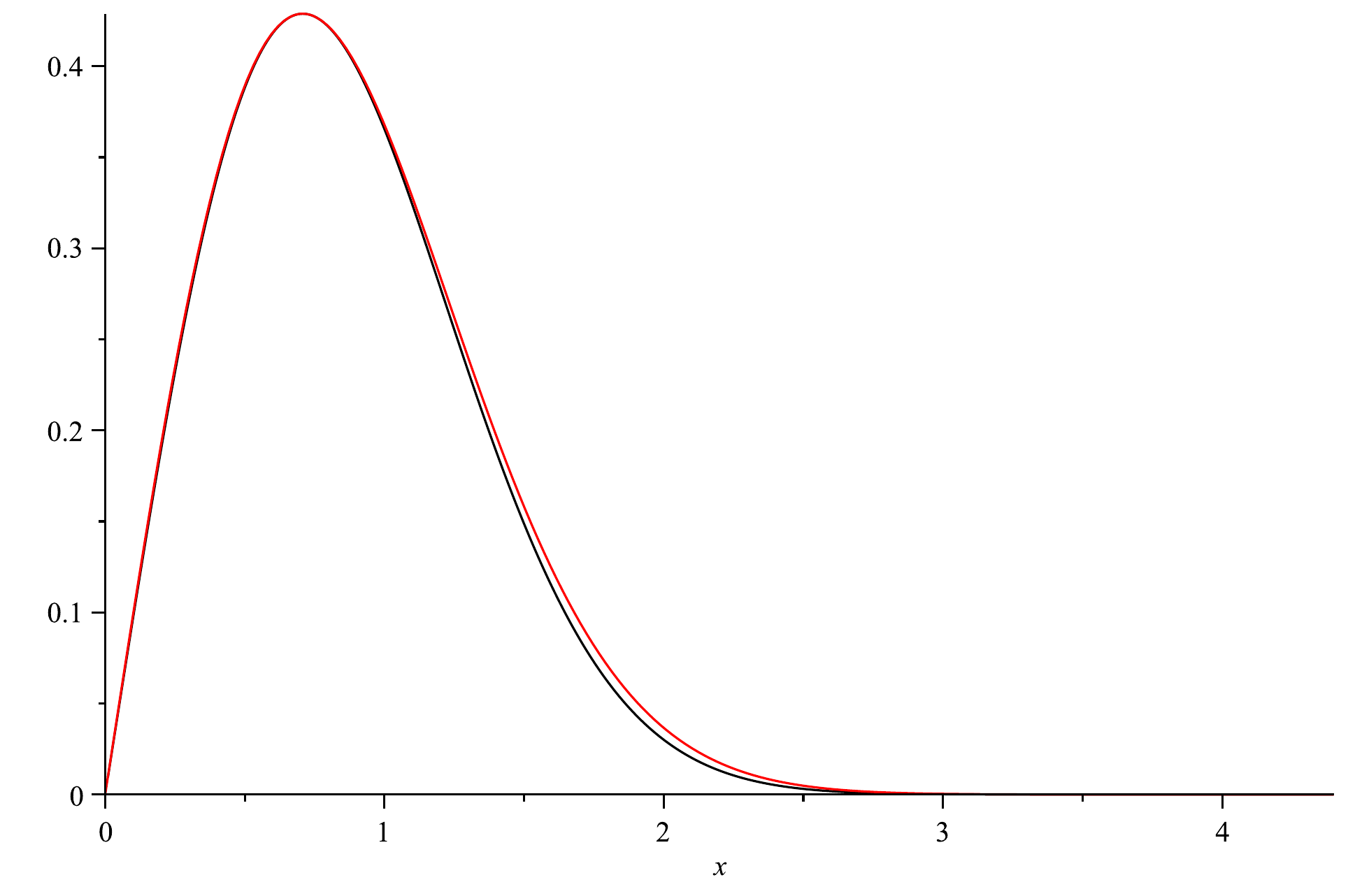}}

 \caption{In Theorem~\ref{theo:lawContacts}, we proved that the number of returns to 0 follows asymptotically a Rayleigh limit law (in the subcritical case of the absorption model). 
Our figure shows a perfect fit between the plot of the theoretical Rayleigh density (in black) and the plot of the empirical distribution (in red) for Motzkin paths of length $n=2000$.
The tiny discrepancy around $x=2$ is completely coherent with the error term, i.e.,~the speed of convergence, in $\LandauO(1/\sqrt{n})$.}
 \end{center}
\end{figure}

The situation is more tricky than what happens for the classical ``absorption model'' of lattice paths considered in~\cite{bafl02}:
one has to pay a price for introducing a more general model, as different cases (subcritical, critical and supercritical) 
have then to be distinguished and additional structural constants like $\dNgeq$ (the drift at $0$) play a key role. Putting it all together, there arose $9$ different cases for each model. 
Interestingly though, elementary considerations implied that some of these were impossible in the specific models (compare Table \ref{tab:lukAsyMeanFinAltRef} 
and Table \ref{tab:lukAsyMeanFinAltAbs} where non-existing cases are marked by a hyphen).
In the full version of this work, we give all details of the proofs omitted here
(we say a word on matters of periodicity, and we extend the results to excursions, meanders, walks, bridges, arches, 
beyond the {\L}ukasiewicz case, i.e.,~when the correspondence with trees do not hold anymore).
In another work in preparation, we give the asymptotics for some other harder parameters (area, height).

\phantomsection
\addcontentsline{toc}{chapter}{Acknowledgments}
\acknowledgements
\label{sec:ack}
This work is the result a collaboration founded by the SFB project F50 ``Algorithmic and Enumerative Combinatorics'' 
and the Franco-Austrian PHC ``Amadeus''. Michael Wallner is supported by the Austrian Science Fund (FWF) grant SFB F50-03 and by \"OAD, grant F04/2012. 
We also thank the two AofA referees and one additional referee for their feedback.
\phantomsection
\addcontentsline{toc}{chapter}{References}
\bibliographystyle{myabbrvnat}
\bibliography{aofa2014}
\label{sec:biblio}
\end{document}